\newtheorem{definition}{Definition}
\newtheorem{proposition}{Proposition}
\newtheorem{lemma}{Lemma}
\newcommand{\sgn}{\mathop{\text{sgn}}\nolimits}
\begin{document}

\begin{frontmatter}

\title{Reduction and efficient solution of MILP models of mixed Hamming packings yielding improved upper bounds}

\author[inst1,inst2]{{P\'eter Naszvadi}}

\affiliation[inst1]{organization={HUN-REN Wigner Research Centre for Physics},
            addressline={Konkoly-Thege Mikl\'os \'ut 29-33.}, 
            city={Budapest},
            postcode={1121}, 
            country={Hungary}}

\affiliation[inst2]{organization={Faculty of Informatics, E\"otv\"os Lor\'and University},
            addressline={P\'azm\'any P\'eter s\'et\'any 1/C}, 
            city={Budapest},
            postcode={1117}, 
            country={Hungary}}

\author[inst1]{{M\'aty\'as Koniorczyk}}

\begin{abstract}
Mixed Hamming packings are considered: the maximal cardinality given a minimum codeword Hamming distance of mixed codes is addressed via mixed integer programming models. Adopting the concept of contact graph from classical continuous sphere packing problems, a reduction technique for the models is introduced, which enables their efficient solution. Several best known upper bounds are improved and some of them are found to be sharp. 
\end{abstract}



\begin{keyword}
mixed codes \sep mixed binary/ternary code \sep Hamming space \sep mixed integer programming \sep densest sphere packing \sep contact graph
\end{keyword}

\end{frontmatter}

\section{Introduction}

Let $\mathbb Z_+$ denote the set of nonnegative integers, $\mathbb Z_m$ be the ring of integers modulo $m$, and $\mathbb Z_m^\alpha$ stand for the ordered $\alpha$-tuples formed from the elements of $\mathbb Z_m$.
A \emph{mixed code} 
is a subset of the Cartesian product ${\mathbb Z}^{\alpha_s}_{k_s} \times \dots \times {\mathbb Z}^{\alpha_2}_{k_2}\times{\mathbb Z}^{\alpha_1}_{k_1}$, with $2\leq k_1<k_2<\dots<k_s$ and $\alpha_j\in \mathbb{Z}, \alpha_j>0$, while $n={\sum_{j=1}^s\alpha_j}$. 
Every element of a code is called a \emph{codeword} or \emph{word}.  To simplify the notation of words the elements of the Cartesian products will be concatenated so that the words are $n$-strings of numbers.
Note that mixed codes are the natural generalizations of \emph{codes} which are the $s=1$ (that is, $n=\alpha_1$) case.
Hence we will use the term \emph{mixed codes} for $n\neq \alpha_1$, i.e. for genuine mixed codes with at least two positions with symbols for a different alphabet, while the term \emph{code} will be used in the general sense for all mixed codes. 
For instance, the \emph{binary-ternary} mixed codes,
with $k_1=2\land k_2=3\land \alpha_1+\alpha_2=n$, are broadly studied in the literature~\cite{WOS:A1991FJ61300010, URL:AEB:Table}. 
They have important applications in football pools~\cite{Virtakallio, WOS:A1995RT88300002} and error correction communication protocols~\cite{WOS:000234944700023}.

Given two words $w$ and $x$, their Hamming-distance is the number of positions in which they differ. 
Formally:
\begin{equation}
    d(w,x)=\sum\limits_{j=1}^{n}\sgn\left|w[j]-x[j]\right|, 
\end{equation}
where $w[j]$ is the $j$-th symbol of the codeword $w$. The set $H$ of all possible codewords, together with the Hamming-distance $d(.,.)$ as a metric is termed as a \emph{Hamming space}. From now on we will use $H$ to denote the Hamming space, and by \emph{distance} we will mean Hamming-distance.

For a code $C$, the \emph{minimum distance of the code} is the minimum of $d(w,x)$ over all distinct $w,x\in C$.
\begin{equation}
    d(C)=\min\limits_{\small\begin{matrix}w\neq x\\ w,x\in C\end{matrix}}d(w,x)
\end{equation}
Amongst the codes having a given minimum distance $d$, there exists at least one which has a maximal cardinality. Such a code is termed as Hamming packing. 
This name comes from a geometric interpretation: considering the codewords as points, with the metric generated by the Hamming distance, a Hamming packing is an extremal packing.
The maximal cardinality a Hamming packing with minimum distance $d$ is denoted by $N_{k_1,k_2,\dots,k_s}(\alpha_1,\alpha_2,\dots,\alpha_s;d)$.

It is an important set of codes for which $k_s=s+1$, implying that the cardinalities $k$ of the subsequent alphabets increase by one (e.g. $k_1=2,\ k_2=3, \ldots$), hence, the $k$-indices of $N$ can be omitted. For instance, in the case of the the aforementioned binary-ternary codes it is practical to introduce the notation $b=\alpha_1$, $t=\alpha_2$, and $N(b,t;d):=N_{2,3}(\alpha_1,\alpha_2;d)$. 

The determination of this maximal cardinality is a key question in coding theory. Usually lower bounds are corollaries of explicit constructions~\cite{WOS:A1977DS09900002}, while upper bounds are derived from estimations, discussed in~\cite{WOS:000242116400012, WOS:000431162700022}. Exact values are known only in a few marginal cases e.g. perfect codes~\cite{Virtakallio, WOS:A1973O900700010, WOS:A1993KH67900025} or for codes in certain very small Hamming-spaces~\cite{URL:AEB:Table, WOS:A1978EN16200010, WOS:A1980KW76600016, WOS:A1991FJ61300009, WOS:A1992HG00600001, WOS:000071193500010}.

In this paper we adopt the notion of the \emph{contact graph} ($CG$) introduced originally for Tammes' $S^2$ classic sphere packing problem~\cite{schutte1951kugel}, for a code $C$ of our code-theoretic setting. The contact graph helps with the classification of ball packings. On this basis, we give a constructive algorithmic proof for the existence of a code having connected contact graph for every $N(\alpha_1,\alpha_2,\dots;d)$. Then we build up an improved mixed integer linear programming (MILP) model that searches for maximal packings having a cherry subgraph (one with and internal node with exactly two leaves) in their contact graphs. Note that the MILP model is equivalent to exhaustive search. State-of-art MILP solvers can solve our model efficiently in many cases, yielding computational bounds that improve some of the currently known best upper bounds to $N(7,1;3)=26$ and $N(4,3;3) = 28$ - former records were~\cite{WOS:000071193500010}: $26\leq N(7,1;3) \leq 30$ and $28\leq N(4,3;3)\leq 30$ (see also ~\cite{URL:AEB:Table} for a regularly updated exhaustive summary of best known bounds). In the last part of this paper, we provide a table for the improved bounds yielded by corollaries.
Our results arise from an integer linear program which, as we shall describe in detail later, differs from the Delsarte~\cite{WOS:000071193500010} linear program; they are for different purposes, yet still provide both lower and upper bounds. 

This paper is organized as follows. In Section~\ref{sec:MILP} we describe the MILP models we study. Section~\ref{sec:reduction} describes our main idea for the size reduction of the models. Section~\ref{sec:compresults} presents our computational results, including the improvement of a number of best known upper bounds. In Section~\ref{sec:conclusion} the results are summarized and conclusions are drawn.

\section{MILP models for Hamming packing}
\label{sec:MILP}

Let $H$ be a given Hamming space and $d\in\mathbb{Z}^+$. Let $x_a$ denote a boolean variable for each possible codeword $a\in H$ so that $x_a=1$ iff $a$ is in the $C$ (maximal) Hamming packing. It is prevalently known that a $C$ can be trivially obtained from the following integer linear program as any of its primal optimal solutions:
\begin{eqnarray}
    \label{eq:milpmodel1}
    \max & \sum\limits_{a\in H}{x_a}& \\
    s.t. &x_a + x_b \leq 1 & 
    \genfrac{}{}{0pt}{1.0}{\forall a,b\in H:}{1\leq d(a,b)\leq (d-1)}
    \nonumber \\
    & x\in{\left\{0,1\right\}}^n& \nonumber.
\end{eqnarray}
The objective ensures the maximality of the packing, whereas the inequality constraints forbid the simultaneous selection of points too close to each other.

The simple model in Eq.~\eqref{eq:milpmodel1} is inefficient in determining $N_{k_1,k_2,\dots,k_s}(\alpha_1,\alpha_2,\dots,\alpha_s;d)$ for the exponential scaling of the number of variables. Without the loss of generality, however, it can be assumed that the all-zero codeword $z$ is always chosen, hence, $z\in C$. Under this assumption the model can be simplified by omitting the variables corresponding to the inner part of the $z$-centered ball with radius $d-1$, leading to
\begin{eqnarray}
    \label{eq:milpmodel2}
    \max & \sum\limits_{a\in H}{x_a}& \\
    s.t. &x_z = 1 \nonumber \\
    ~ & x_a = 0 & \scaleto{1\leq d(a,z) \leq d-1}{7pt} \nonumber \\
    ~ &x_a + x_b \leq 1 & \genfrac{}{}{0pt}{1.0}{\forall a,b\in H: }{1\leq d(a,b)\leq (d-1)} \nonumber \\
    & x\in{\left\{0,1\right\}}^n& \nonumber.
\end{eqnarray}

For practical reasons, the obsoleted binary variables can be removed from the model in order to decrease the memory consumption of the presolvers: the model can be written in the form
\begin{eqnarray}
    \label{eq:milpmodel3}
    \max & 1+\sum\limits_{\genfrac{}{}{0pt}{1.0}{a\in H}{d(a,z) \geq d}}{x_a}& \\
    s.t. &x_z = 1 \nonumber \\
    ~ &x_a + x_b \leq 1 &
    \genfrac{}{}{0pt}{1.0}{\forall a,b\in H\setminus B(z,d-1): }{ 1\leq d(a,b)\leq (d-1)}\nonumber \\
    & x\in{\left\{0,1\right\}}^n& \nonumber.
\end{eqnarray}
Note that developing linear programming models in order to get bounds for the cardinalities of extremal Hamming packings is a fairly old idea, see e.g. Delsarte's results\cite{Delsarte1972Bounds}. Our MILP model, however, has the advantange that it is an equivalent declarative reformulation of the exhaustive search.

\section{Reducing MILPs using contact graphs}
\label{sec:reduction}

In this Section we use contact graphs. We show that every Hamming packing $C$ has an equivalent one, with connected $CG(C)$ contact graph.

\begin{definition} 
The contact graph $CG(C)$  of a Hamming packing $C$ (with minimal distance $d$ which is fixed) is the graph whose vertices correspond to codewords in $C$, and there is an edge between pairs of nodes $a$ and $b$ iff  $d(a,b)=d$.
\end{definition}

\begin{proposition}
\label{prop1}
Every Hamming packing $C$ with minimal distance $d$ can be transformed to another Hamming packing $C'$ with the same number of codewords and minimal distance, whose contact graph $CG(C')$ is connected.
\end{proposition}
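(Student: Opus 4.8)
The plan is to give an explicit finite procedure that, starting from $C$, repeatedly ``pulls one connected component towards the rest of the code'' until the contact graph becomes connected, while keeping the cardinality and the minimum distance fixed. The tool is the translation action on $H$: for $t\in H$ the map $\sigma_t\colon w\mapsto w+t$, where $+$ is coordinatewise addition in the rings $\mathbb{Z}_{k_j}$, is an isometry, since $w[j]+t[j]$ and $x[j]+t[j]$ coincide exactly when $w[j]$ and $x[j]$ do; the key point is that $\sigma_t$ may legitimately be applied to only a \emph{part} of the code.

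Assume $CG(C)$ is disconnected; let $K$ be one of its connected components, put $A=K$ and $B=C\setminus A$ (both nonempty). Since $A$ and $B$ lie in different components, $d(a,b)\ge d+1$ for all $a\in A$, $b\in B$. Let $\delta=\min\{d(a,b):a\in A,\ b\in B\}\ge d+1$, attained at a pair $(a_0,b_0)$, and let $S=\{j:a_0[j]\neq b_0[j]\}$, so $|S|=\delta$. Fix any $S'\subseteq S$ with $|S'|=\delta-d\ (\ge 1)$, define $t\in H$ by $t[j]=a_0[j]-b_0[j]$ for $j\in S'$ and $t[j]=0$ otherwise, and set $C'=A\cup\sigma_t(B)$.

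Then I would verify that $C'$ works. Distances within $A$ are unchanged, and distances within $\sigma_t(B)$ equal those within $B$ because $\sigma_t$ is an isometry, so both are $\ge d$. For a mixed pair $a'\in A$, $b'\in B$, the word $b'+t$ differs from $b'$ in exactly the $|S'|=\delta-d$ coordinates of $S'$, hence by the triangle inequality $d(a',b'+t)\ge d(a',b')-(\delta-d)\ge\delta-(\delta-d)=d$. Thus $C'$ has no two equal words, so $|C'|=|C|$, and $d(C')\ge d$. On the closest pair we have $(b_0+t)[j]=a_0[j]$ for every $j\in S'$ while $b_0+t$ agrees with $b_0$ off $S'$, so $d(a_0,b_0+t)=\delta-(\delta-d)=d$: this is an actual edge of $CG(C')$, whence $d(C')=d$. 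Finally, under the obvious identification of $C'$ with $C$ (identity on $A$, $\sigma_t$ on $B$) the edge set of $CG(C')$ contains that of $CG(C)$ — the edges inside $A$ are unchanged, the edges inside $\sigma_t(B)$ correspond bijectively to those inside $B$, and there were no edges between $A$ and $B$ — and it contains in addition at least the edge $a_0\sim b_0+t$, which joins $A$ to the translated copy of the component of $b_0$. Therefore $CG(C')$ has strictly fewer connected components than $CG(C)$.

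Since the number of connected components is a positive integer that drops by at least one at every iteration, the procedure terminates after finitely many steps, and the code it produces has the same cardinality and minimum distance as $C$ together with a connected contact graph. The one step that needs genuine care — and the only place where ``$d(C)=d$ exactly'' and ``$\delta$ is the \emph{minimum} cross distance'' are really used — is the calibration of the translation: $|S'|=\delta-d$ is chosen just large enough to force a new contact on the pair $(a_0,b_0)$ and, simultaneously, just small enough that no other cross distance can be pushed below $d$. Everything else is bookkeeping.
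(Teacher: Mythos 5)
Your proof is correct. It realizes the same underlying idea as the paper --- apply an isometry of the Hamming space to only \emph{part} of the code so as to create a new contact without destroying feasibility --- but the execution is genuinely different and in fact more streamlined. The paper grows a connected set $C_1$ from a single seed, absorbing codewords at distance exactly $d$, and when stuck it applies a transposition of two symbols at a \emph{single} coordinate to all of the remainder $C_2$; this decreases $d(C_1,C_2)$ by exactly one, so the gap is closed one unit per iteration, and termination is argued via the lexicographic decrease of $\bigl(|C_2|,\,d(C_1,C_2)\bigr)$. You instead work with whole connected components and use a single translation supported on exactly $\delta-d$ coordinates of a closest cross pair, closing the gap in one step; your progress measure is simply the number of connected components of the contact graph. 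What each buys: in the paper's unit-step version the preservation of feasibility is immediate (each cross distance changes by at most one, and all were $\geq d+1$), whereas you need the triangle inequality together with the minimality of $\delta$ --- which you correctly identify as the one delicate calibration; in exchange you get a shorter argument, a cleaner termination criterion, and a single clean isometry (coordinatewise translation, valid for arbitrary mixed alphabets since adding a nonzero constant in $\mathbb{Z}_{k_j}$ is a fixed-point-free bijection). All the verifications you list --- internal distances preserved, cross distances bounded below by $d$, the pair $(a_0,b_0+t)$ landing at distance exactly $d$, injectivity of the identification, and the strict drop in component count --- are sound, so the induction terminates and yields the claimed $C'$.
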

\begin{proof}
The desired transfromation can be carried out using the following algorithm ($\bigcup^{*}$ denotes disjoint union).
\begin{algorithm}
    \caption{Transforming $C$ to $C'$ with the same number of codeword and minimal distance, and connected $CG(C')$}\label{alg:proof1}
    \begin{algorithmic}[1]
        \STATE INPUT: $C \subseteq H$ nonempty Hamming packing, $d \in {\mathbb Z}, d>0$ distance
        \STATE START
        \STATE pick a random $\hat w\in C$, them partition $C:=C_1 \bigcup^{*} C_2$, where $C_1=\{\hat w\}$ and $C_2 = C \setminus C_1$
        \WHILE{$C_{2} \neq \emptyset$}
            \WHILE{$\exists w,w': w\in C_1, w'\in C_2: d(w,w')=d$}
                \STATE update $C_1: C_1 := C_1 \bigcup^{*} \{w'\}$
                \STATE update $C_2: C_2 := C_2 \setminus \{w'\}$
            \ENDWHILE \label{lst:line:endwhile2}
            \IF{$(C_{2} \neq \emptyset)$ and $(d(C_1,C_2)\neq d)$}
                \STATE sort increasing the elements of $C_2$ on distance of $C_1$
                \STATE let $d'$ denote: $d':=d(C_1,C_2)$ \label{lst:line:setsetsdistance}
                \STATE select a $w'$ smallest element from $C_2$
                \STATE select arbitrary $w: w\in C_1$, where $d(w,w')=d'$
                \STATE select arbitrary index $j$, where $w[j] \neq w'[j]$
                \STATE denote $a:=w[j], b:=w'[j]$
                \FOR{each $y \in C_2$}
                    \IF{$y[j]=a$}
                        \STATE $y[j]=b$
                    \ELSIF{$y[j]=b$}
                        \STATE $y[j]=a$
                    \ENDIF
                \ENDFOR \label{lst:line:endfor}
            \ENDIF
        \ENDWHILE \label{lst:line:endwhile1}
        \STATE STOP, OUTPUT: $C' := C_1$
    \end{algorithmic}
\end{algorithm}

It is important to note that Algorithm~\ref{alg:proof1} terminates in a finite number of steps. In order to prove this, observe that the following statements are met:
\begin{itemize}
    \item The $C$ code remains a $d$-feasible Hamming packing at every loops' ends in lines \ref{lst:line:endwhile2},~\ref{lst:line:endfor},~\ref{lst:line:endwhile1}.
    \item $CG(C_{1})$ is a connected graph in line \ref{lst:line:endwhile2}.
    \item Once $d'$ reached $d+1$ at line \ref{lst:line:setsetsdistance}, at the end of \emph{while} loop at line \ref{lst:line:endwhile2} $C_2$ must loose one of its elements that will move to $C_1$.
    \item At the end of the \emph{for} loop (line \ref{lst:line:endfor}), $CG(C_2)$ remain isomorph, no Hamming-distance will be altered between elements of $C_2$.
    \item At the end of the for loop the distance betveen $C_1$ and $C_2$ decreases exactly by one. As the distance was larger than d at the beginning of the loop,
$C_1 \bigcup^\ast C_2$ still does not contain any pairs with distance less than $d$.
    \item The cardinality of $C_2$ is strictly decreasing or $d(C_1,C_2)$ is strictly decreasing in line in line \ref{lst:line:endwhile2},~\ref{lst:line:endfor}~\ref{lst:line:endwhile1}.
\end{itemize}
    As the cardinality of $C_2$ is finite and $d(C_1,C_2) \leq n$, the outer loop always terminates. 
\end{proof}

As a consequence of Propositon~\ref{prop1}, the discussion can be restricted to Hamming packings with connected contact graphs. Let us consider the case when $|C|\geq 2$. Then we can fix two nodes which are adjacent. In the special case when $H = \mathbb{Z}^n_k$ it is enough to force the selection of two balls when determining $N_k(n;d)$. However, in the case of general mixed codes, we can also choose the all-zero codeword as one of the nodes, but we must partition the nonzero elements amongst the same-alphabet subwords of the code. Hence, we must branch when selecting the second node of the pair. Upon this branching it would be a mistake to assign all the nonzero letters to the smallest-alphabet subwords: table \ref{tab:counterexamples-d3} contains some of the counterexample optimal mixed codes with length $5$ and $d=3$, whereas \ref{tab:counterexamples-d4} tabulates counterexamples for $d=4$. These examples illustrate the necessity of traversing other branches, too.

The key idea of our reduction technique is the following: \emph{in each branch, having chosen the initial pair of codewords, the words that are at a distance less than $(d-1)$ from the chosen pair, can be eliminated from the model by fixing the respective variables.} This leads to a drastical reduction in the size of the respective MILP models.

\hspace{0pt}

\begin{table}[ht!]
\centering
\begin{tabular} {|c||c|c|}\hline
 & \makecell{Cardinality of maximal \\Hamming-packing} & \makecell{Cardinality of maximal \\Hamming-packing \\ when $00000$ and $00111$ codewords \\ must be selected} \\\hline\hline
$N(4,1;3)$&$6$&$4$\\\hline
$N(2,3;3)$&$9$&$8$\\\hline
$N_{2,4}(4,1;3)$&$8$&$5$\\\hline
$N(2,2,1;3)$&$11$&$9$\\\hline
$N_{2,5}(4,1;3)$&$8$&$5$\\\hline
$N_{2,3,5}(2,2,1;3)$&$12$&$11$\\\hline
$N_{2,6}(4,1;3)$&$8$&$5$\\\hline
$N_{2,7}(4,1;3)$&$8$&$5$\\\hline
\end{tabular}
\caption{\label{tab:counterexamples-d3}Counterexamples for $d=3$.}
\end{table} ~ \begin{table}[ht!]
\centering
\begin{tabular} {|c||c|c|}\hline
 & \makecell{Cardinality of maximal \\Hamming-packing} & \makecell{Cardinality of maximal \\Hamming-packing \\ when $00000$ and $01111$ codewords \\ must be selected} \\\hline\hline
$N(3,2;4)$&$3$&$2$\\\hline
$N(3,1,1;4)$&$3$&$2$\\\hline
$N(2,2,1;4)$&$4$&$3$\\\hline
$N_{2,4}(3,2;4)$&$4$&$2$\\\hline
$N_{2,3,5}(3,1,1;4)$&$3$&$2$\\\hline
$N_{2,3,5}(2,2,1;4)$&$4$&$3$\\\hline
$N_{2,3,5}(1,3,1;4)$&$5$&$4$\\\hline
$N_{2,4,5}(3,1,1;4)$&$4$&$2$\\\hline
$N_{2,5}(3,2;4)$&$4$&$2$\\\hline
$N_{2,3,6}(3,1,1;4)$&$3$&$2$\\\hline
$N_{2,3,6}(2,2,1;4)$&$4$&$3$\\\hline
$N_{2,3,6}(1,3,1;4)$&$6$&$4$\\\hline
$N_{2,4,6}(3,1,1;4)$&$4$&$2$\\\hline
$N_{2,5,6}(3,1,1;4)$&$4$&$2$\\\hline
$N_{2,6}(3,2;4)$&$4$&$2$\\\hline
$N_{2,3,7}(3,1,1;4)$&$3$&$2$\\\hline
$N_{2,3,7}(2,2,1;4)$&$4$&$3$\\\hline
$N_{2,3,7}(1,3,1;4)$&$6$&$4$\\\hline
$N_{3,7}(4,1;4)$&$7$&$6$\\\hline
$N_{2,4,7}(3,1,1;4)$&$4$&$2$\\\hline
$N_{3,8}(4,1;4)$&$8$&$6$\\\hline
$N_{3,9}(4,1;4)$&$9$&$6$\\\hline
\end{tabular}
\caption{\label{tab:counterexamples-d4}Counterexamples for $d=4$.}
\end{table}

\hspace{0pt}

\section{Computational results}
\label{sec:compresults}

For our experiments we have used the following hardware configuration: 256GB RAM, two AMD EPYC 7302 16-Core Processors, total 32 cores. Version 22.1.0.0 of IBM CPLEX~\cite{SW:CPLEX2210} was used.

\subsection{Finding N(7,1;3)=26}

For mixed binary-ternary codes, the Hamming-distance can be written as a sum of two functions,
\begin{eqnarray}
    \label{eq:hammingdecomp}
    d(w,w') &=&d^b(w,w')+d^t(w,w'),
\end{eqnarray}
where $d^b(.,.)$ is the number of differences at binary, while $d^t(.,.)$ is the differences at ternary coordinates. The lower bound had been already known via an explicit construction according to~\cite{WOS:000071193500010}, and the upper bound was $30$.

\begin{lemma}
\label{lemma1}
For $N(7,1;3)$, every maximal Hamming packing must contain two codewords $w, w'$, where the following property is met: $d(w,w')=3$ and $d^b(w,w')=3$.
\end{lemma}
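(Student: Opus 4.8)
The plan is to argue by contradiction, exploiting the fact that the single ternary coordinate splits any $(7,1)$ packing into three binary codes of length $7$. Let $C$ be a maximal Hamming packing realizing $N(7,1;3)$, so that $|C|=N(7,1;3)\ge 26$ by the known explicit lower‑bound construction. First I would partition $C=C_0\cup C_1\cup C_2$ according to the value $i\in\{0,1,2\}$ of the (unique) ternary symbol of a codeword. For two distinct words $w,w'$ lying in the same $C_i$ one has $d^t(w,w')=0$, hence $d(w,w')=d^b(w,w')$; after deleting the constant ternary coordinate, each $C_i$ is therefore a binary code of length $7$ whose minimum distance is at least $d=3$.

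Next I would note that a pair $w,w'$ of the kind claimed in the statement, if it existed, would necessarily lie inside one of the three parts: from $d=d^b+d^t$ together with $d^b(w,w')=3=d(w,w')$ we get $d^t(w,w')=0$, i.e.\ $w$ and $w'$ share their ternary symbol. So suppose, for contradiction, that \emph{no} such pair exists in $C$. Then no two distinct codewords of the same $C_i$ are at binary distance exactly $3$; combined with the packing constraint (binary distance $\ge 3$ within $C_i$), this forces each $C_i$, regarded as a binary code of length $7$, to have minimum distance at least $4$.

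The concluding step is a counting contradiction based on the classical bound $A(7,4)=8$ for binary codes — which follows for instance from the Plotkin bound, since $2\cdot 4>7$ yields $A(7,4)\le 2\left\lfloor 4/(2\cdot 4-7)\right\rfloor=8$. Hence $|C|=|C_0|+|C_1|+|C_2|\le 8+8+8=24$, contradicting $|C|\ge 26$. Therefore every maximal Hamming packing for $N(7,1;3)$ must contain two codewords $w,w'$ with $d(w,w')=3$ and $d^b(w,w')=3$.

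The argument is short, so there is no genuine obstacle beyond bookkeeping; the two points that must be handled carefully are (i) that the known lower bound $N(7,1;3)\ge 26$ is genuinely available as input, without which the estimate $|C|\le 24$ would not be contradictory, and (ii) that the implication ``forbidden pair $\Rightarrow$ identical ternary symbol'' is made explicit, so that the jump of the minimum distance from $3$ to $4$ inside each $C_i$ — and hence the use of $A(7,4)$ rather than $A(7,3)$ — is fully justified.
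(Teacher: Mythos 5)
Your proof is correct, but it proceeds quite differently from the paper's. The paper argues computationally: it sets up an integer linear program in which all pairs at distance $1$ or $2$ are forbidden as usual, and additionally all pairs with $d(a,w)=3\land d^b(a,w)=3$ are forbidden, then reports that GLPK and CPLEX return an optimal objective value of $24<26$, which contradicts the known lower bound $N(7,1;3)\ge 26$. You reach the same threshold $24$ by hand: partitioning a hypothetical counterexample $C$ by the value of the ternary symbol gives three classes $C_0,C_1,C_2$ in which $d=d^b$, the absence of a pair with $d=d^b=3$ pushes the minimum distance within each class from $3$ up to $4$, and the Plotkin bound $A(7,4)=8$ gives $|C|\le 24$. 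All the steps check out: the observation that $d=3$ and $d^b=3$ force $d^t=0$ correctly reduces the forbidden pairs to same-class pairs, distinctness of the truncated binary words within a class is automatic since $d^b\ge 3$, and $2\cdot 4>7$ does yield $A(7,4)\le 2\lfloor 4/(2\cdot 4-7)\rfloor=8$. Your argument has the advantage of being elementary and verifiable without trusting solver output, and it explains structurally why the value $24$ appears (it is exactly $3\cdot A(7,4)$, and the solver's optimum of $24$ confirms your bound is tight); the paper's ILP formulation, on the other hand, is a uniform recipe that would apply to analogous ``forbidden-pair'' hypotheses for other parameter sets where no clean counting argument is available. One presentational point: state explicitly at the outset that $|C|=N(7,1;3)\ge 26$ via the known construction, since without that input the bound $|C|\le 24$ yields no contradiction.
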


\begin{proof}
Indirectly suppose that there is a counterexample Hamming packing with at least $26$ codewords. This packing also must be a feasible solution of the following ILP model:
\begin{eqnarray}
    \label{eq:milpmodel713}
    \max & \sum\limits_{{a\in H}}{x_a}& \\
    s.t. &x_a + x_w \leq 1 &
    {1\leq d(a,w)\leq 2}\nonumber \\
    ~ &x_a + x_w \leq 1 & {d(a,w)=3 \land d^b(a,w)=3}\nonumber \\
    & x\in{\left\{0,1\right\}}^n& \nonumber
\end{eqnarray}
Both \emph{GLPK}~\cite{SW:GLPK50} and \emph{CPLEX}~\cite{SW:CPLEX2210} solved the model \eqref{eq:milpmodel713} fast, that is, in less than 5 minutes, getting an integer optimal solution with an objective value $24$ strictly below $26$ - the best known feasible solution of the original problem. 
\end{proof}

\paragraph{Computational model and evaluation.}

As a consequence of Lemma~\ref{lemma1}, it is sufficient to consider one of the two branches.
Hence, we solve the MILP model described at \eqref{eq:milpmodel3} with the following constraints fixing the initial pair:  
\begin{eqnarray}
    \label{eq:milpmodel713b}
x_{0,0000000}=1, \nonumber \\ x_{0,0000111}=1.
\end{eqnarray}
The so reduced model contains $300$ binary variables. After five days of running, \emph{CPLEX} exited providing an integer optimal solution.

An example of a solution is the following list of total 26 codewords:
$00000000$, $00000111$, $00011001$, $00011110$, $00110011$, $01001101$, $01101010$, $01110100$, $10001100$,
$10010010$, $10100110$, $10101001$, $10110101$, $11001011$, $11010111$, $11011000$, $11100000$, $11111110$,
$20001010$, $20010100$, $20101111$, $20111000$, $21000001$, $21000110$, $21110010$, $21111101$. \hfill $\square$

\subsection{Finding N(4,3;3)=28}

Again, the marginal metric decomposition will be used as defined previously in \eqref{eq:hammingdecomp}. The lower bound had been already known via an explicit construction~\cite{WOS:000071193500010}, and the upper bound was $30$.

\paragraph{A specially constrained model}
For $N(4,3;3)$, every maximal Hamming packing must contain two codewords $w, w'$, where the following property is met: $d(w,w')=3$ and $d^t(w,w')=3$. To prove this, the pigeonhole principle should be applied: arbitrary $17$ elements from every packing must have at least two codes containing exactly the same binary digits, which implies that all further $3$ (ternary) digits must differ in the selected codeword pair. Using this observation, the MILP model in \eqref{eq:milpmodel3} had been solved with two additional constraints
\begin{eqnarray}
    \label{eq:milpmodel433b}
x_{000,0000}=1, \nonumber \\ x_{111,0000}=1. 
\end{eqnarray}
The solving process successfully finished using \emph{CPLEX}, which exited with integer optimal status after $6$ hours. It returned the objective function value of  $28$. \hfill $\square$

An example solution with $28$ codewords is the following: $0000000$, $0010101$, $0011010$, $0021111$, $0100110$, $0101011$, $0111100$, $0201101$, $0210011$, $0221000$, $1001100$, $1010110$, $1020001$, $1100101$, $1110000$, $1111111$, $1121010$, $1200010$, $1211001$, $1220111$, $2000111$, $2020010$, $2101000$, $2121101$, $2200001$, $2201110$, $2210100$, $2221011$.

\subsection{Further improvements for $N(b,t;d)$}

The improved upper bounds we have found for mixed binary-ternary Hamming-packings with distance $d=3$ are tabulated in table \ref{tab:summary-results-d3} whereas for distances $d=4$ in table \ref{tab:summary-results-d4}. In the tables, each nonempty cell contains the former upper bound in parentheses, and inequality is printed iff our improved upper bound is still greater than the known biggest lower bound, that is, the respective MIP gap is still not closed.

\hspace{0pt}

\begin{table}[ht!]
\centering
\small
\begin{tabular} {|c||c|c|c|c|c|}\hline
$b$ \textbackslash $t$ &1&2&3&4&5 \\\hline\hline
2&&&&&$\leq 63~(65)^{\text{[pr. 4.3/iv]}}$\\\hline
3&&&&$= 42~(44)^{\text{[pr. 4.3/iv]}}$&\\\hline
4&&&= 28 (30)&$\leq 84~(88)^{\footnotesize[pr. 4.3/ii]}$&\\\hline
5&&&$\leq 56~(60)^{\text{[pr. 4.3/ii]}}$&&\\\hline
6&&$\leq 39~(44)^{\text{[pr. 4.3/iv]}}$&$\leq 112~(118)^{\text{[pr. 4.3/ii]}}$&&\\\hline
7&= 26 (30)&$\leq 78~(83)^{\text{[pr. 4.3/ii]}}$&$\leq 224~(225)^{\text{[pr. 4.3/ii]}}$&&\\\hline
8&$\leq 52~(59)^{\text{[pr. 4.3/ii]}}$&&&&\\\hline
9&$\leq 104~(108)^{\text{[pr. 4.3/ii]}}$&&&&\\\hline
10&$\leq 208~(212)^{\text{[pr. 4.3/ii]}}$&&&&\\\hline
\end{tabular}
\caption{\label{tab:summary-results-d3}Improved upper bounds of Hamming packings for $d=3$. The number in the parentheses reflect the former best known upper bounds, followed by the proposition's number in~\cite{WOS:000071193500010} they are known from.}
\end{table}

\begin{table}[ht!]
\centering
\small
\begin{tabular} {|c||c|c|c|c|}\hline
$b$ \textbackslash $t$ &2&3&4&5\\\hline\hline
3&&&&$\leq 42~(43)^{\text{[prop. 4.3/iv]}}$\\\hline
4&&&$ = 28~(30)^{\text{[prop. 4.3/vi]}}$&\\\hline
5&&&$\leq 56~(59)^{\text{[prop. 4.3/ii]}}$&\\\hline
6&&$\leq 39~(40)^{\text{[prop. 4.3/iv]}}$&$\leq 112~(114)^{\text{[prop. 4.3/ii]}}$&\\\hline
7&$= 26~(30)^{\text{[prop. 4.3/vi]}}$&$\leq 78~(80)^{\text{[prop. 4.3/ii]}}$&&\\\hline
8&$\leq 52~(59)^{\text{[prop. 4.3/ii]}}$&&&\\\hline
9&$\leq 104~(108)^{\text{[prop. 4.3/ii]}}$&&&\\\hline
10&$\leq 208~(212)^{\text{[prop. 4.3/ii]}}$&&&\\\hline
\end{tabular}
\caption{\label{tab:summary-results-d4}Improved upper bounds of Hamming packings for $d=4$. The number in the parentheses reflect the former best known upper bounds, followed by the proposition's number in~\cite{WOS:000071193500010} they are known from.}
\end{table}

\section{Summary and outlook}
\label{sec:conclusion}

We have introduced a reduction technique for mixed codes. Our technique is based on our idea to adopt the notion of contact graphs, motivated by continuous sphere packing problems. Using the technique, we have efficiently improved various best known values of maximal cardinalities of Hamming packings with a given minimum distance of Hamming packings, using mixed integer linear programming.
Our approach can work for bigger problem instances, and is not restricted to binary-ternary codes. In spite of the limited number of variables in the models they are challenging for classical solvers. This suggests that their further study may yield benchmark problems for quantum computers that bear practical relevance.

\section*{Acknowledgements}

This research was supported by the National Research, Development, and Innovation Office of Hungary under project numbers K133882 and K124351, the Ministry of Innovation and Technology and the National Research, Development and Innovation Office within the Quantum Information National Laboratory of Hungary.
The hardware resources were provided by the Wigner Scientific Computing Laboratory (WSCLAB). We thank Mikl\'os Pint\'er (Corvinus University, Budapest) for running the CPLEX calculations, and S\'andor Szab\'o (University of P\'ecs) and Andr\'as Bodor (HUN-REN Wigner RCP) for useful discussions.


\begin{thebibliography}{10}
\expandafter\ifx\csname url\endcsname\relax
  \def\url#1{\texttt{#1}}\fi
\expandafter\ifx\csname urlprefix\endcsname\relax\def\urlprefix{URL }\fi
\expandafter\ifx\csname href\endcsname\relax
  \def\href#1#2{#2} \def\path#1{#1}\fi

\bibitem{WOS:A1991FJ61300010}
J.~van Lint, G.~J. van Wee, Generalized bounds on binary/ternary mixed packing
  and covering codes, Journal of Combinatorial Theory Series A 57~(1) (1991)
  130--143.

\bibitem{URL:AEB:Table}
A.~E. Brouwer, Mixed binary/ternary codes,
  \mbox{\url{https://www.win.tue.nl/~aeb/codes/23codes.html}} (Jun. 2018
  (accessed April 5, 2023)).

\bibitem{Virtakallio}
J.~Virtakallio, A football pool system with 729 columns (in finnish), Veikkaaja
  27,28,33 (AUG,SEP 1947).

\bibitem{WOS:A1995RT88300002}
H.~H{\"a}m{\"a}l{\"a}inen, I.~Honkala, S.~Litsyn, P.~{\"O}sterg{\aa}rd,
  Football pools - a game for mathematicians, American Mathematical Monthly
  102~(7) (1995) 579--588.

\bibitem{WOS:000234944700023}
S.~Perkins, A.~Sakhnovich, D.~Smith, On an upper bound for mixed
  error-correcting codes, IEEE Transactions on Information Theory 52~(2) (2006)
  708--712.

\bibitem{WOS:A1977DS09900002}
M.~Best, A.~Brouwer, Triply shortened binary hamming code is optimal, Discrete
  Mathematics 17~(3) (1977) 235--245.

\bibitem{WOS:000242116400012}
D.~Gijswijt, A.~Schrijver, H.~Tanaka, New upper bounds for nonbinary codes
  based on the terwilliger algebra and semidefinite programming, Journal of
  Combinatorial Theory Series A 113~(8) (2006) 1719--1731.

\bibitem{WOS:000431162700022}
B.~Litjens, Semidefinite bounds for mixed binary/ternary codes, Discrete
  Mathematics 341~(6) (2018) 1740--1748.

\bibitem{WOS:A1973O900700010}
A.~Tiet{\"a}v{\"a}inen, On the nonexistence of perfect codes over finite
  fields, SIAM Journal on Applied Mathematics 24~(1) (1973) 88--96.

\bibitem{WOS:A1993KH67900025}
T.~Etzion, G.~Greenberg, Constructions for perfect mixed codes and other
  covering codes, IEEE Transactions on Information Theory 39~(1) (1993)
  209--214.

\bibitem{WOS:A1978EN16200010}
M.~Best, A.~Brouwer, F.~MacWilliams, A.~Odlyzko, N.~Sloane, Bounds for binary
  codes of length less than 25, IEEE Transactions on Information Theory 24~(1)
  (1978) 81--92.

\bibitem{WOS:A1980KW76600016}
M.~Best, Binary codes with a minimum distance of 4, IEEE Transactions on
  Information Theory 26~(6) (1980) 738--742.

\bibitem{WOS:A1991FJ61300009}
G.~J. van Wee, Bounds on packings and coverings by spheres in q-ary and mixed
  hamming spaces, Journal of Combinatorial Theory Series A 57~(1) (1991)
  117--129.

\bibitem{WOS:A1992HG00600001}
F.~Kschischang, S.~Pasupathy, Some ternary and quaternary codes and associated
  sphere packings, IEEE Transactions on Information Theory 38~(2, 1) (1992)
  227--246.

\bibitem{WOS:000071193500010}
A.~Brouwer, H.~H{\"a}m{\"a}l{\"a}inen, P.~{\"O}sterg{\aa}rd, N.~Sloane, Bounds
  on mixed binary/ternary codes, IEEE Transactions on Information Theory 44~(1)
  (1998) 140--161.

\bibitem{schutte1951kugel}
K.~Sch{\"u}tte, B.~L. van~der Waerden, Auf welcher kugel haben 5, 6, 7, 8 oder
  9 punkte mit mindestabstand eins platz?, Mathematische Annalen 123~(1) (1951)
  96--124.

\bibitem{Delsarte1972Bounds}
P.~Delsarte, Bounds for unrestricted codes, by linear programming, Philips
  Research Reports 27 (1972) 272--289.

\bibitem{SW:CPLEX2210}
{IBM Corporation}, {CPLEX}, version 22.1.0.0,
  \mbox{\url{https://www.ibm.com/docs/en/icos/22.1.0}} (Mar. 2022).

\bibitem{SW:GLPK50}
A.~Makhorin, {GNU Linear Programming Kit}, version 5.0,
  \mbox{\url{https://www.gnu.org/software/glpk/glpk.html}} (Dec. 2020).

\end{thebibliography}

\end{document}